\documentclass[11pt]{article}
\usepackage{fullpage,amsmath,amsfonts,comment,booktabs,mathtools} 
\usepackage{amsthm,amssymb,framed}
\usepackage{times}
\usepackage[colorlinks=true,urlcolor=magenta,linkcolor=blue,citecolor=blue]{hyperref}

\newcommand{\sparagraph}[1]{\vspace{1mm}\noindent {\bf #1}}

\hypersetup{pdftitle={Min-Sum Subset Convolution and Exact Bushy Join Ordering},
pdfauthor={Mihail Stoian}}

\newtheorem{theorem}{Theorem}
\newtheorem{lemma}[theorem]{Lemma}
\newtheorem{corollary}[theorem]{Corollary}
\theoremstyle{definition}
\newtheorem{example}[theorem]{Example}

\newcommand{\Z}{\mathbb Z}
\newcommand{\poly}{\operatorname{poly}}
\newcommand{\polylog}{\operatorname{polylog}}
\newcommand{\OPT}{\operatorname{OPT}}
\newcommand{\Cout}{C_{\mathrm{out}}}
\newcommand{\disj}{\mathbin{\dot\cup}}
\newcommand{\FF}{\mathcal F}

\title{Truly Sub-$3^n$ Min-Sum Subset Convolution and Join Ordering}
\author{Mihail Stoian\\University of Technology Nuremberg}
\date{}

\begin{document}
\maketitle

\begin{abstract}
We present a deterministic reduction from min-sum subset convolution to min-plus matrix product. We show that if the min-plus product of two $D\times D$ matrices with $\beta$-bit integer entries can be computed in $D^{3-\delta}\poly(\beta,\log D)$ time for a fixed rational $0<\delta<1$, then min-sum subset convolution on an $n$-element universe can be solved in $(2+2^{-\delta})^n 2^{O(\sqrt n\log(n+1))}\poly(n,\beta)$ time. Instantiating this reduction with the recent breakthrough on subcubic min-plus matrix product by Alman and Vassilevska Williams gives a Las Vegas algorithm with expected running time $O^*(2.9987^n)$ and a deterministic algorithm with running time $O^*(2.9997^n)$, strictly breaking the longstanding $3^n$ computational barrier. Notably, these speedups translate directly to database query optimization, yielding the same expected and deterministic running-time bounds for join ordering under the $C_{\mathrm{out}}$ cost function.
\end{abstract}

\section{Introduction}

Consider two functions $f,g:2^{[n]}\to\Z\cup\{+\infty\}$. Their min-sum subset convolution is defined as:
\begin{equation}
 h(S)=(f\star g)(S)
 =\min_{A\disj B=S}\bigl(f(A)+g(B)\bigr),
 \qquad S\subseteq[n].
 \label{eq:convolution}
\end{equation}
A naive, direct evaluation inherently examines $3^n$ ordered disjoint pairs $(A,B)$.
Previous work by Bj\"orklund, Husfeldt, Kaski, and Koivisto~\cite{fsc} established a fast subset-convolution algorithm over rings, alongside an $\widetilde O(2^n M)$-time algorithm for min-sum subset convolution where integer values are strictly bounded in magnitude by $M$. Stoian~\cite{stoian-approx} later conjectured that exact min-sum subset convolution could not achieve $O((3-\varepsilon)^n\polylog M)$-time for any constant $\varepsilon>0$.

In the domain of database query optimization, this very primitive appears in the problem of join ordering, under the setting of bushy query plans, with many advances on the exact algorithm since the infancy of the field~\cite{selinger,  Cluet95onthe, vance_maier, clique_np_hard, vance_phdthesis, lohman_cross_products, dpccp, dphyp, jo_as_sp}. Stoian and Kipf~\cite{dpconv} formalized this alignment in \texttt{DPconv}, demonstrating an exact algorithm for the $\Cout$ objective that relies on the bounded-integer subset-convolution algorithm. While the running time of their algorithm for the alternative $C_{\max}$ objective avoids the dependence on the largest cardinality $W$, their exact $\Cout$ optimization exhibits a linear dependency in this parameter, namely $O^*(2^n W)$.

\sparagraph{Contribution.}~We introduce a deterministic reduction from min-sum subset convolution to min-plus matrix products. This reduction implies a truly sub-$3^n$ algorithm given the recent subcubic matrix product algorithm~\cite{breakthrough}. Crucially, our algorithm is exact, imposes no polynomial bound on the magnitude of
the integer inputs, and has only polynomial dependence on their binary
encoding length. We show that the similar time bound holds in the setting of join ordering via the connection established in \cite{dpconv}; similar improvements are natural to other problems, whose solutions use this primitive (see, e.g., \cite{fomin_exp_algos, fsc, stoian-approx} for a few examples). Needless to say, the latter result, touching a fundamental problem in query optimization, comes as a (huge) surprise to the author.


\sparagraph{Computational Model.}~Throughout, all finite values of the input functions and all
cardinality estimates are integers represented in binary.
Let $\beta\ge1$ bound their bit length.
For subset convolution, the two input functions are supplied
as explicit tables of $2^n$ values each; these tables may also
contain the symbol $+\infty$. We use a random-access model and charge arithmetic according
to operand bit length.
All computations are exact, and polynomial factors in address
length are included in the bounds.
The notation $O^*(a^n)$ suppresses factors polynomial in
$n$ and $\beta$, but never polynomial factors in the full
$2^n$-entry table size.
A Las Vegas algorithm always returns the exact answer and has
the stated expected running time.

\begin{theorem}[Min-sum subset convolution]\label{thm:transfer}
Fix a rational $0<\delta<1$. Suppose the min-plus product of two $D\times D$
integer matrices with $\beta$-bit entries can be computed in
$D^{3-\delta}\poly(\beta,\log D)$ time.
Then min-sum subset convolution can be computed in time
\begin{equation}
 (2+2^{-\delta})^n
 2^{O(\sqrt n\log(n+1))}\poly(n,\beta).
 \label{eq:main-time}
\end{equation}
A deterministic product algorithm gives a deterministic convolution
algorithm. A Las Vegas product algorithm gives a Las Vegas convolution
algorithm with the corresponding expected bound.
\end{theorem}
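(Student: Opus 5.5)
The plan is to isolate a ``product-shaped'' core of the three-state enumeration, hand that core to the assumed min-plus algorithm, and brute-force everything else. A direct evaluation of \eqref{eq:convolution} assigns each element one of three roles: in $A$, in $B$, or outside $S$. The bound \eqref{eq:main-time} expands as $(2+2^{-\delta})^n=\sum_{T\subseteq[n]}2^{n-|T|}2^{-\delta|T|}$. This suggests an accounting in which every element ``costs'' $2$, except that elements of a designated part $T$ are batched into matrix products and saved a factor $2^{\delta}$ each. So the goal is a decomposition in which, for fixed data on the complement of $T$, the remaining minimisation over $T$ is literally a min-plus product of $2^{|T|}\times 2^{|T|}$-ish matrices. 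I do not yet have a construction with this accounting, and the plan only works if one exists.

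Concretely, I would split $[n]$ into $k\approx\sqrt n$ blocks $U_1,\dots,U_k$ of size about $\sqrt n$. For each block I would guess the ``profile'' of an optimal pair $(A,B)$: the numbers $|A\cap U_i|$ and $|B\cap U_i|$. There are $(n+1)^{O(\sqrt n)}=2^{O(\sqrt n\log(n+1))}$ profiles, which is exactly the subexponential overhead in \eqref{eq:main-time}. Once a profile is fixed, the rows, inner index and columns of the product can be indexed by sets of prescribed sizes inside each block. This makes the three-dimensional index cube of the product essentially ``tight'': only triples consistent with disjointness and $A\disj B=S$ appear, so the $D^3$ cost of the matrix multiplication is not wasted on the $2^{3t}-3^t$ infeasible triples. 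The natural candidate is to take rows $A\cap X$, inner index $A\cap Y$ and columns $S\cap Y$ for a split $X\disj Y$ of the ground set. However, $g$ depends on both the $X$-part and the $Y$-part of $B$, so this does not factor as a product; I still need an additional re-indexing, plausibly per block and per profile, to decouple $f$ and $g$. The final answer is then a min over all profiles. Correctness follows because every disjoint pair $(A,B)$ is consistent with exactly one profile, and every matrix entry corresponds to a genuine pair, so the min is neither under- nor over-counted.

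The remaining steps are bookkeeping. First, sum $D_\pi^{3-\delta}$ over all profiles $\pi$ and blocks, and check via a multinomial/entropy estimate that the total is at most $(2+2^{-\delta})^n2^{O(\sqrt n\log(n+1))}$; the brute-force part should contribute the ``$2$'' per element and the product part the ``$2^{-\delta}$''. Second, replace $+\infty$ entries by a sentinel such as $2^{\beta+2}$ so that the product routine only ever sees $O(\beta)$-bit integers, and afterwards map any sum exceeding $2^{\beta+1}$ back to $+\infty$. Third, observe that the reduction itself is deterministic, so determinism or the Las Vegas property, and the expected bound by linearity of expectation, transfer directly from the product algorithm.

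The main obstacle is the decoupling step above: finding an indexing of rows, inner index and columns for which the per-$S$ minimisation really is a min-plus product. A secondary obstacle is then arranging the sizes so that the product part yields precisely the base $2+2^{-\delta}$ rather than something weaker.
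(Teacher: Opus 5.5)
Your proposal is not yet a proof. The step you flag as the main obstacle is an indexing that turns the per-$S$ minimisation into a genuine min-plus product. That step is exactly the missing idea, and enumerating profiles cannot supply it. Fixing the cardinalities $|A\cap U_i|$ and $|B\cap U_i|$ does not determine \emph{which} elements of the inner part lie in $B$, and that is why $g$ will not factor under your candidate indexing.

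The paper constrains \emph{roles} instead of sizes. Fix an ordered partition $[n]=X\disj Y\disj Z$ and admit only pairs with $X\cap C=\varnothing$, $Y\cap A=\varnothing$ and $Z\cap B=\varnothing$, where $C=[n]\setminus(A\cup B)$. The inner index $T=A\cap X$ then forces $B\cap X=X\setminus T$. Rows are $U=A\cap Z$ and columns are $V=B\cap Y$, so $F[U,T]=f(U\cup T)$ and $G[T,V]=g((X\setminus T)\cup V)$ decouple exactly. Every triple $(U,T,V)$ is a valid disjoint pair with union $X\cup U\cup V$, so no part of the index cube needs pruning. Unbalanced shapes are handled by tiling into square blocks of side $2^m$ with $m=\min\{|X|,|Y|,|Z|\}$. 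With $q=2^{-\delta}$, this costs $2^{n-\delta m}\le 2^n(q^{|X|}+q^{|Y|}+q^{|Z|})$ up to $\poly(n,\beta)$.

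The second missing piece is the accounting. It is a covering argument, not an entropy estimate over profiles. Each ordered partition $(A,B,C)$ is compatible with exactly $2^n$ of the $3^n$ restrictions, since each role is allowed by exactly two labels. Also $\sum_\sigma q^{|X_\sigma|}=\sum_{T\subseteq[n]}2^{n-|T|}q^{|T|}=(2+q)^n$. This is precisely your heuristic sum, with $T=X_\sigma$ and $2^{n-|T|}$ counting the splits of $[n]\setminus T$ into $Y$ and $Z$. Hence putting weight $2^{-n}$ on every restriction gives a fractional cover whose total work, $2^n\sum_\sigma 2^{-n}(q^{|X_\sigma|}+q^{|Y_\sigma|}+q^{|Z_\sigma|})$, equals $3(2+q)^n$.

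To make this cover integral in subexponential time, the paper runs weighted greedy set cover on blocks of $r=\lceil\sqrt n\rceil$ coordinates. This loses a factor $O(H_{3^r})=O(r)$ per block. It then takes the Cartesian product of the local covers, whose weights multiply because each $q^{|L_\sigma|}$ factorises over blocks. The resulting loss, $O(\sqrt n)^{O(\sqrt n)}=2^{O(\sqrt n\log(n+1))}$, is the real source of the subexponential factor in the claimed bound.

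Your sentinel for $+\infty$ and your transfer of determinism and Las Vegas expectations are fine, but they are the routine part.
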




We prove Theorem~\ref{thm:transfer} in Sec.~\ref{sec:global}.
Combining it with the min-plus algorithms of Alman and
Vassilevska Williams~\cite{breakthrough} yields the following bounds.

\begin{corollary}\label{cor:min-sum-corr}
Min-sum subset convolution on integer inputs can be computed
by a Las Vegas algorithm in expected $O^*(2.9987^n)$ time
and by a deterministic algorithm in $O^*(2.9997^n)$ time.
\end{corollary}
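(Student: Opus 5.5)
The corollary should follow by instantiating Theorem~\ref{thm:transfer} with the subcubic min-plus algorithms of Alman and Vassilevska Williams~\cite{breakthrough}, so the work is to pick the right $\delta$ and check the arithmetic. That paper gives, for $D\times D$ integer matrices with $\beta$-bit entries (hence polynomially bounded in $2^\beta$), a Las Vegas algorithm in expected time $D^{3-\delta_{LV}}\poly(\beta,\log D)$ and a deterministic algorithm in time $D^{3-\delta_{det}}\poly(\beta,\log D)$, for explicit small constants $\delta_{det}<\delta_{LV}$. Before invoking the theorem, I will check that the bit-length dependence really is polynomial in $\beta$ and $\log D$, as the hypothesis requires. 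If their bound is stated for entries bounded by $M$, the dependence $\poly(\log M)$ is exactly $\poly(\beta)$. Since Theorem~\ref{thm:transfer} requires a rational $\delta$, I will replace each exponent saving by a slightly smaller rational, which only weakens the product hypothesis.

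Plugging in gives running times $(2+2^{-\delta})^n 2^{O(\sqrt n\log(n+1))}\poly(n,\beta)$. The key numerical step is to verify $2+2^{-\delta_{LV}}<2.9987$ and $2+2^{-\delta_{det}}<2.9997$ strictly. Equivalently, I need $\delta_{LV}>-\log_2 0.9987\approx 1.877\cdot10^{-3}$ and $\delta_{det}>-\log_2 0.9997\approx 4.33\cdot10^{-4}$. I will read off the precise exponents from~\cite{breakthrough} and confirm these inequalities, choosing the rational $\delta$ strictly between the threshold and the true saving. This is the only real obstacle: if the published savings fall just below these thresholds, the constants $2.9987$ and $2.9997$ must instead be derived from them, so the check has to be exact rather than heuristic.

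Finally, I absorb the subexponential factor. Because $2+2^{-\delta}<c$ strictly, where $c\in\{2.9987,2.9997\}$, the ratio $\bigl((2+2^{-\delta})/c\bigr)^n$ decays exponentially. It therefore dominates $2^{O(\sqrt n\log(n+1))}$, so the total is $O(c^n\poly(n,\beta))=O^*(c^n)$. The determinism and Las Vegas transfer clauses of Theorem~\ref{thm:transfer} then give the two claimed algorithm types with the stated (expected) bounds.
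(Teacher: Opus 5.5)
The overall route is right, and it is the paper's: instantiate Theorem~\ref{thm:transfer} with a rational $\delta$, check the bases, and absorb $2^{O(\sqrt n\log(n+1))}$ using the strict inequality. Your thresholds $\delta_{LV}>1.877\cdot10^{-3}$ and $\delta_{det}>4.33\cdot10^{-4}$ are also correct. The gap is the step you treat as a citation. You assume \cite{breakthrough} directly supplies $D^{3-\delta}\poly(\beta,\log D)$ algorithms for arbitrary $\beta$-bit integer entries, and it does not.

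\textbf{Deterministic case.} Their deterministic result (Theorem~22) is proved only for weights polynomially bounded in the matrix dimension $D$. Your proposed check (``if the bound is stated for entries bounded by $M$, then $\poly(\log M)=\poly(\beta)$'') therefore fails. A bound valid only for $M\le D^{O(1)}$ says nothing about $M=2^\beta$ when $\beta$ is unrelated to $D$. That is exactly the regime Theorem~\ref{thm:transfer} needs: in Lemma~\ref{lem:restriction-cost} the blocks have $D=2^m$ with $m$ possibly $0$, and the inputs have unbounded magnitude. The paper closes this gap in Lemma~\ref{lem:AW} by reopening the proof of the Exact Triangle step:
\begin{itemize}
\item[(a)] A nonzero triangle sum of absolute value below $3\cdot2^\beta$ has at most $O((\beta+1)/\log L)$ prime divisors in $[\sqrt L/2,\sqrt L)$.
\item[(b)] Hence the selected prime has $O(s^3(\beta+1)/\sqrt L)$ false positives, and witness search costs $O(s^3(\beta+1)/g)$.
\item[(c)] With $L=\Theta(s^{1/18})$ and $g=\Theta(L^{1/36})$, Exact Triangle runs in $s^{3-1/648}\poly(\beta,\log s)$.
\item[(d)] Their reduction then gives min-plus product in $D^{3-1/1944}\poly(\beta,\log D)$.
\end{itemize}

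\textbf{Las Vegas case.} The Las Vegas algorithm (their Theorem~35) is stated on a real RAM, so it also needs a bit-complexity argument. Input-dependent arithmetic must touch only sums and differences of $O(1)$ original entries, which are $O(\beta)$-bit numbers. The auxiliary counting uses $O(\log D)$-bit integers. You would need to verify both.

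\textbf{Exponents.} You never commit to the exponents, yet the corollary asserts specific constants, so your argument is conditional on numbers you have not produced. With Lemma~\ref{lem:AW} the exponents are $2.998$ and $2.9995$; the second is valid since $1/1944>1/2000$. This gives $\delta=1/500$ and $\delta=1/2000$. Both clear your thresholds: $2+2^{-1/500}\approx2.99861$ and $2+2^{-1/2000}\approx2.99965$. Once the bit-complexity lemma is in place, the rest of your argument goes through as written.
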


\begin{corollary}\label{cor:jo-corr}
The same bounds hold for join ordering under $\Cout$,
provided each cardinality estimate $c(S)$ can be obtained in
$\poly(n,\beta)$ time.
For ordinary undirected query graphs, both bounds hold
whether cross products are permitted or forbidden.
\end{corollary}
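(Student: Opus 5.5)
We obtain Corollary~\ref{cor:jo-corr} by writing exact $\Cout$ join ordering as a recurrence whose expensive step is min-sum subset convolution. We then evaluate that recurrence with $n$ calls to the algorithm of Theorem~\ref{thm:transfer}, or equivalently to the instantiations in Corollary~\ref{cor:min-sum-corr}. Number the relations $[n]$ and let $c(S)$ be the cardinality estimate of the join of $S$. For bushy plans with cross products permitted, the optimal cost satisfies
\[
\OPT(\{i\})=0,\qquad \OPT(S)=c(S)+\min_{A\disj B=S,\;A,B\neq\emptyset}\bigl(\OPT(A)+\OPT(B)\bigr)\quad(|S|\ge2).
\]
This is the formulation already used in \texttt{DPconv}~\cite{dpconv}. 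Whether the cost of the root is counted only shifts every value by the constant $c([n])$, so it is immaterial. The difficulty is that the recurrence is self-referential, whereas Theorem~\ref{thm:transfer} convolves two fixed tables.

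To remove the self-reference, I proceed by layers $k=2,\dots,n$. Suppose $\OPT(A)$ is known for all $|A|<k$. Define the table $f_k(A)=\OPT(A)$ if $1\le|A|<k$, and $f_k(A)=+\infty$ otherwise. Compute $h_k=f_k\star f_k$ with Theorem~\ref{thm:transfer}, and set $\OPT(S)=c(S)+h_k(S)$ for every $|S|=k$.

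This is correct because any split $A\disj B=S$ with $|S|=k$ and both parts nonempty has $1\le|A|,|B|<k$. Conversely, every finite term of $h_k(S)$ comes from such a split. The $n-1$ convolutions cost only a $\poly(n)$ factor. Building the tables costs $2^n$ cardinality queries at $\poly(n,\beta)$ each, plus linear-time table manipulation. All of this lies well below $(2+2^{-\delta})^n$.

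Bit lengths stay controlled: every finite $\OPT$ value is a sum of at most $n-1$ estimates, so it has $\beta+O(\log n)$ bits. The $\poly(n,\beta)$ factor of Theorem~\ref{thm:transfer} is therefore preserved. Plugging in the Las Vegas and deterministic instantiations behind Corollary~\ref{cor:min-sum-corr} yields the stated bounds. The Las Vegas claim holds by linearity of expectation over the $n$ calls.

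To output the plan and not just its cost, I reconstruct top-down. For the current set $S$, I scan its at most $2^{|S|}$ splits and pick one with $\OPT(A)+\OPT(B)=\OPT(S)-c(S)$, then recurse on both parts. The plan has $n-1$ internal nodes, so this takes $O^*(2^n)$ time.

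For an undirected query graph $G$ with cross products forbidden, I redefine $f_k(A)=\OPT(A)$ only when $1\le|A|<k$ and $G[A]$ is connected, and $f_k(A)=+\infty$ otherwise. I then set $\OPT(S)=c(S)+h_k(S)$ for connected $S$ and $\OPT(S)=+\infty$ for disconnected $S$. Each connectivity test takes $\poly(n)$ time.

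The point to verify is that the allowed splits are exactly the pairs of connected $A,B$ partitioning a connected $S$. Such a pair automatically has an edge between $A$ and $B$, because $S$ is connected. So the join is not a cross product, and the convolution ranges over precisely the valid csg--cmp pairs. Hence both settings reduce to the same $n$ convolutions, and both bounds hold.

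The main obstacle I expect is not any individual step. It is making sure the layered reduction loses nothing: the per-layer tables must be fixed inputs to Theorem~\ref{thm:transfer}, the values' bit growth must stay $O(\log n)$, and the $+\infty$ masking must encode the connectivity constraint exactly.
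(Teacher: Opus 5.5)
Your proposal is correct and follows essentially the same route as the paper's proof. The shared steps are the layer-wise tables $f_k$ masked to $+\infty$ outside $1\le|A|<k$, the $n-1$ convolutions $h_k=f_k\star f_k$, the $O(\beta+\log n)$ bit bound on DP values, top-down reconstruction in $O^*(2^n)$ time, and $+\infty$-masking of disconnected sets for the cross-product-free case. The paper also states explicitly a converse that you leave implicit in your ``exactly'' claim: every cross-product-free plan has connected leaf sets at all its nodes, so the masking excludes no valid plan.
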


Our Lemma~\ref{lem:AW} (Appendix \ref{sec:bits}) gives min-plus exponents $2.998$ and $2.9995$
for the Las Vegas and deterministic algorithms, respectively.
Since Theorem~\ref{thm:transfer} assumes an exponent of $3-\delta$,
we take
\[
 \delta=3-2.998=0.002
 \qquad\text{and}\qquad
 \delta=3-2.9995=0.0005,
\]
respectively. Substituting these values gives
\[
 2+2^{-0.002}=2.998614666\ldots<2.9987,
 \qquad
 2+2^{-0.0005}=2.999653486\ldots<2.9997.
\]
In each case, the gap between the two constants absorbs the
subexponential factor in Eq.~\eqref{eq:main-time}.
We justify the min-plus bounds in Sec.~\ref{sec:bits}
and prove the join-ordering result in Sec.~\ref{sec:join}.\footnote{
A prototype implementation is available at
\url{https://github.com/stoianmihail/LeDP}.}

\section{Restricted Matrix Products}\label{sec:matrix}

\sparagraph{Partitions and Restrictions.}~A candidate evaluation in Eq.~\eqref{eq:convolution} is defined by two operands, $f(A)$ and $g(B)$. Across all possible output sets $S$, every element in the universe $[n]$ must assume exactly one of three states: (i) it belongs to the left child $A$, (ii) to the right child $B$, or (iii) to the excluded outside set $C = [n] \setminus S$. Consequently, any candidate intrinsically forms an ordered ternary partition $(A,B,C)$ of the universe $[n]$. For any fixed target set $S$, the outside set $C$ remains constant while the elements of $S$ are distributed between $A$ and $B$.

To systematically divide the evaluation space, we introduce a secondary ordered partition of the universe, $[n]=X\disj Y\disj Z$, which serves as a structural restriction on the valid candidates. Rather than representing physical subplans, these sets act as indexing filters that constrain element membership according to the following strict conditions:

\begin{center}
\begin{tabular}{@{}lll@{}}
\toprule
Restriction part & Allowed candidate memberships & Forbidden membership \\
\midrule
$X$ (middle)   & \emph{left} child $A$ $\lor$ \emph{right} child $B$ & \emph{outside} set $C$ \\
$Y$ (columns)  & \emph{right} child $B$ $\lor$ \emph{outside} set $C$ & \emph{left} child $A$ \\
$Z$ (rows)     & \emph{left} child $A$ $\lor$ \emph{outside} set $C$ & \emph{right} child $B$ \\
\bottomrule
\end{tabular}
\end{center}

Formally, a candidate partition $(A,B,C)$ is \emph{compatible} with a given restriction $(X,Y,Z)$ if and only if its forbidden intersections are empty, namely:
\begin{equation}
 X\cap C=\varnothing,\qquad
 Y\cap A=\varnothing,\qquad
 Z\cap B=\varnothing.
 \label{eq:compatibility}
\end{equation}

\begin{lemma}[Restricted product]\label{lem:product}
For a restriction $(X,Y,Z)$, define matrices indexed by
$U\subseteq Z$, $T\subseteq X$, and $V\subseteq Y$:
\begin{equation}
 F[U,T]=f(U\cup T),
 \qquad
 G[T,V]=g((X\setminus T)\cup V).
 \label{eq:matrices}
\end{equation}
The entry $(F\star_{\min,+}G)[U,V]$ is the minimum of
$f(A)+g(B)$ over all compatible candidates with
$A\disj B=X\cup U\cup V$.
\end{lemma}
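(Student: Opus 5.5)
The plan is to unfold the definition of the min-plus product and set up an explicit bijection between the summation index $T$ and the compatible candidates with the prescribed output set. By definition,
\[
 (F\star_{\min,+}G)[U,V]=\min_{T\subseteq X}\bigl(F[U,T]+G[T,V]\bigr)=\min_{T\subseteq X}\bigl(f(U\cup T)+g((X\setminus T)\cup V)\bigr),
\]
with $U\subseteq Z$ and $V\subseteq Y$ fixed. So it suffices to show that the map $T\mapsto(A,B)=(U\cup T,\,(X\setminus T)\cup V)$ is a bijection from $2^X$ onto the set of compatible candidates $(A,B,C)$ with $A\disj B=X\cup U\cup V$. Here $C=[n]\setminus(X\cup U\cup V)$ is forced.

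\emph{Step 1: the image consists of valid compatible candidates.} Since $X,Y,Z$ are pairwise disjoint, the sets $U\cup T$ and $(X\setminus T)\cup V$ are disjoint. Their union is $X\cup U\cup V$. Set $C=(Z\setminus U)\cup(Y\setminus V)$. Then:
\begin{itemize}
 \item $X\cap C=\varnothing$, because $C\subseteq Y\cup Z$;
 \item $Y\cap A=\varnothing$, because $A\subseteq Z\cup X$;
 \item $Z\cap B=\varnothing$, because $B\subseteq X\cup Y$.
\end{itemize}
Hence every term $F[U,T]+G[T,V]$ equals $f(A)+g(B)$ for a compatible candidate with the required union.

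\emph{Step 2: surjectivity.} Take any compatible $(A,B,C)$ with $A\disj B=X\cup U\cup V$, and put $T=A\cap X$.
\begin{itemize}
 \item Compatibility gives $A\cap Y=\varnothing$. We have $A\cap Z\subseteq(X\cup U\cup V)\cap Z=U$.
 \item Conversely, $U\subseteq A\cup B$ and $U\cap B\subseteq Z\cap B=\varnothing$, so $U\subseteq A$. Thus $A=T\cup U$.
 \item Symmetrically, $B\cap Z=\varnothing$ and $B\cap Y=V$, using $V\cap A\subseteq Y\cap A=\varnothing$.
 \item Since $X\subseteq A\disj B$, we get $B\cap X=X\setminus T$. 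Thus $B=(X\setminus T)\cup V$.
\end{itemize}
Injectivity is immediate, since $T=A\cap X$ is recovered from the image. Taking minima over corresponding sets then gives the claim. An empty minimum is $+\infty$ on both sides, which cannot occur here since $2^X\neq\varnothing$.

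The only delicate point is Step 2. There one must use that $U\subseteq Z$ and $V\subseteq Y$ are pinned down exactly, so that elements of $Z\setminus U$ and $Y\setminus V$ are forced into $C$. All else is routine set bookkeeping.
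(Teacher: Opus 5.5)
Your proof is correct and follows essentially the same argument as the paper: the same correspondence $T\mapsto(U\cup T,(X\setminus T)\cup V)$ with inverse $T=A\cap X$, using compatibility to force $A\cap Y=B\cap Z=\varnothing$ and the covering of $X$ to get $B\cap X=X\setminus T$. You are slightly more explicit than the paper in checking that the fixed $U$ and $V$ are recovered as $A\cap Z$ and $B\cap Y$, but the route is identical.
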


\begin{proof}
For every middle index $T\subseteq X$, the operands correspond to
\[
 A=U\cup T,\qquad B=(X\setminus T)\cup V.
\]
The sets are disjoint and their union is $X\cup U\cup V$.
They satisfy Eq.~\eqref{eq:compatibility}.
Conversely, any compatible candidate has the unique representation
\[
 T=A\cap X,\qquad U=A\cap Z,\qquad V=B\cap Y.
\]
Compatibility forces all of $X$ into $A\cup B$, so $B\cap X=X\setminus T$.
It also gives $A\cap Y=B\cap Z=\varnothing$.
Consequently, Eq.~\eqref{eq:matrices} enumerates precisely the compatible
candidates, and taking the minimum proves the statement.
\end{proof}

Let us see next a concrete example of this construction.

\begin{example}\label{ex:binary}
Let $[n]=\{1,2,3,4\}$, let $S=\{1,2,3\}$, and consider the binary
split $A=\{1,3\}$ and $B=\{2\}$. The outside set is $C=\{4\}$.
Choose $X=\{1,2\}$, $Y=\{4\}$, and $Z=\{3\}$.
The row $U=\{3\}$, middle index $T=\{1\}$, and column $V=\varnothing$
produce exactly
\[
 F[\{3\},\{1\}]+G[\{1\},\varnothing]
 =f(\{1,3\})+g(\{2\}).
\]
With this, we can update the result of $h(\{1,2,3\})$.
\end{example}

\begin{lemma}\label{lem:restriction-cost}
Under the assumption of Theorem~\ref{thm:transfer}, let
$m=\min\{|X|,|Y|,|Z|\}$. All updates from one restriction can be computed
in $2^{n-\delta m}\poly(n,\beta)$ time.
The same statement holds in expectation for a Las Vegas product algorithm.
\end{lemma}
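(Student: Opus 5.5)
The approach is to compute $F\star_{\min,+}G$ from Lemma~\ref{lem:product} by cutting it into square blocks of side $D=2^m$ and calling the assumed min-plus algorithm on each pair of blocks. Write $x=|X|$, $y=|Y|$, $z=|Z|$, so $x+y+z=n$ and $m=\min\{x,y,z\}$. The matrix $F$ has size $2^z\times 2^x$ and $G$ has size $2^x\times 2^y$. All three dimensions are powers of two and each is at least $2^m$, so they split exactly into $2^m\times 2^m$ blocks: $2^{z-m}$ row blocks, $2^{x-m}$ middle blocks and $2^{y-m}$ column blocks.

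For every triple of blocks $(i,k,j)$ we multiply block $(i,k)$ of $F$ by block $(k,j)$ of $G$ in the $(\min,+)$ sense. We then take the entrywise minimum over $k$ to obtain block $(i,j)$ of the product. The number of block products is
\[
2^{x+y+z-3m}=2^{n-3m},
\]
and each costs $D^{3-\delta}\poly(\beta,\log D)=2^{(3-\delta)m}\poly(\beta,m)$. The total product cost is therefore
\[
2^{n-3m}\cdot 2^{(3-\delta)m}\poly(n,\beta)=2^{n-\delta m}\poly(n,\beta).
\]
If $m=0$, then $D=1$, the block products are scalar additions, and the same count gives $2^n\poly(n,\beta)$.

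\textbf{Handling $+\infty$ entries.} The assumed algorithm accepts only $\beta$-bit integers. Let $R=2^{\beta+2}$. Replace every $+\infty$ by $R$, which still has $O(\beta)$ bits. Any finite sum $f(A)+g(B)$ is less than $2^{\beta+1}$ in absolute value. A sum that involves at least one $R$ is at least $R-2^{\beta}>2^{\beta+1}$. Hence any result that is at least $2^{\beta+1}$ is declared $+\infty$, and every finite result is correct. The polynomial dependence on the bit length absorbs the change from $\beta$ to $\beta+2$.

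\textbf{Remaining linear-time work.} Building the $2^{z+x}$ entries of $F$ and the $2^{x+y}$ entries of $G$ takes table lookups at $\poly(n,\beta)$ each. Taking minima over $k$ and pushing each entry $[U,V]$ into $h(X\cup U\cup V)$ also takes $\poly(n,\beta)$ per entry. We need $z+x\le n-\delta m$, which holds because $y\ge m\ge\delta m$. Likewise $x+y\le n-\delta m$ because $z\ge m$. The output has $2^{z+y}$ entries, but the min-over-$k$ work is already counted within the block-product bound, and $2^{z+y}\le 2^{n-m}$ anyway.

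\textbf{Main obstacle.} The only real care point is this accounting. The bound $2^{n-\delta m}$ has to dominate the sizes of $F$, $G$ and the output, and not merely the product itself, and it does because $\delta<1$ and each of the three sizes omits one part of size at least $m$.

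\textbf{Las Vegas case.} Linearity of expectation over the $2^{n-3m}$ independent calls gives the expected bound with the same count. Each call always returns the exact answer, so the whole computation is exact.
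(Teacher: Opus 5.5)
Your proposal is correct and follows the paper's proof essentially step for step. You tile the three index ranges into blocks of side $D=2^m$, count $2^{n-3m}$ block products of cost $D^{3-\delta}\poly(\beta,\log D)$, and bound the remaining reads, minima and updates by $O(2^{n-m})\le O(2^{n-\delta m})$ using $\delta<1$. Folding the $+\infty$ replacement into the lemma, with proxy $2^{\beta+2}$ instead of the paper's $4W+1$, is equivalent to the paper's separate paragraph on resolving $+\infty$ entries.
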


\begin{proof}
The matrix dimensions in Eq.~\eqref{eq:matrices} are
$2^{|Z|}\times2^{|X|}$ and $2^{|X|}\times2^{|Y|}$.
Tile the three index ranges into blocks of length $D=2^m$.
There are $2^{n-3m}$ triples of blocks, each requiring one $D\times D$
min-plus product. Their total running time is
\[
 2^{n-3m}D^{3-\delta}\poly(n,\beta)
 =2^{n-\delta m}\poly(n,\beta).
\]
Take entrywise minima over the inner blocks.
The total number of entries read or updated over all block triples is
$O(2^{n-m})$, which is at most $O(2^{n-\delta m})$ because $\delta<1$.
Note that index generation and table access add polynomial factors in $n$. This argument also covers $m=0$, when the square products have dimension one.
\end{proof}

\sparagraph{Resolution of $+\infty$ Entries.}~The square matrix primitive requires only finite integer operands. Let $W\ge1$ bound the absolute values of all finite entries within $f$ and $g$, substituting $+\infty$ with the finite proxy $H=4W+1$. Consequently, the sum of any two valid finite entries lies in $[-2W,2W]$, whereas any sum incorporating a proxy replacement strictly exceeds $2W$. Following each matrix product, values exceeding $2W$ are deterministically mapped back to $+\infty$. This replacement mechanism requires only $O(\beta)$ bits and mathematically preserves the integrity of every valid update.
\section{A Deterministic Covering Lemma}\label{sec:cover}

Write $q=2^{-\delta}\in(1/2,1)$.
For a restriction $\sigma=(X_\sigma,Y_\sigma,Z_\sigma)$ on a block of
$r$ coordinates, set
\begin{equation}
 w(\sigma)=q^{|X_\sigma|}+q^{|Y_\sigma|}+q^{|Z_\sigma|}.
 \label{eq:weight}
\end{equation}
A family is \emph{complete} if every ordered partition $(A,B,C)$ of the
block is compatible with some family member.
Let $H_t=\sum_{j=1}^t 1/j$.

\begin{lemma}[Complete local covers]\label{lem:cover}
For every $r\ge1$, one can deterministically construct in
$2^{O(r)}$ bit operations a complete family $\FF_r$ satisfying
\begin{equation}
 \sum_{\sigma\in\FF_r}w(\sigma)
 \le 6H_{3^r}\left(\frac{2+q}{2}\right)^r.
 \label{eq:local-cover}
\end{equation}
The constants in the construction may depend on the fixed rational
$\delta$.
\end{lemma}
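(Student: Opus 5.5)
This is a set-cover argument. The ground set is the $3^r$ ordered partitions $(A,B,C)$ of the block. The sets are the restrictions $\sigma$, with $\sigma$ covering exactly the partitions compatible with it, and the cost of $\sigma$ is $w(\sigma)$. The factor $H_{3^r}$ in Eq.~\eqref{eq:local-cover} signals the plan: bound the fractional cover value by an averaging argument, then round deterministically with the greedy weighted set-cover algorithm. Greedy loses at most a factor $H_{3^r}$ against any fractional cover, which is where that factor comes from.

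\textbf{Step 1: a fractional cover.} I would use the product distribution in which each coordinate independently lands in $X$, $Y$ or $Z$ with probability $1/3$ each. Fix a partition $(A,B,C)$ with sizes $a,b,c$. Each element has exactly two allowed parts out of three; for instance, an element of $A$ may lie in $X$ or $Z$. So the compatibility probability is $(2/3)^r$, whatever $(A,B,C)$ is. Next I compute the expected weight:
\[
\mathbb E[q^{|X|}]=\Bigl(\tfrac{2+q}{3}\Bigr)^r,
\]
and likewise for $Y$ and $Z$, so $\mathbb E[w]=3\bigl(\tfrac{2+q}{3}\bigr)^r$. Put weight $x_\sigma=\Pr[\sigma]/(2/3)^r=2^{-r}$ on every restriction. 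Every partition is then covered with total mass exactly $1$, and
\[
\sum_\sigma x_\sigma w(\sigma)=3\Bigl(\tfrac{2+q}{3}\Bigr)^r\Big/\Bigl(\tfrac23\Bigr)^r=3\Bigl(\tfrac{2+q}{2}\Bigr)^r.
\]

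\textbf{Step 2: deterministic rounding.} Run greedy weighted set cover over all $3^r$ restrictions. At each step, pick the $\sigma$ minimizing $w(\sigma)$ divided by the number of newly covered partitions. The standard analysis of Chv\'atal/Lov\'asz bounds greedy's cost by $H_{d}$ times the LP optimum, where $d\le 3^r$ is the maximum set size. This gives $3H_{3^r}\bigl(\tfrac{2+q}{2}\bigr)^r$, which is within the claimed $6H_{3^r}(\cdot)^r$. The extra factor $2$ is slack. I would use it to cover the fact that $w$ involves the irrational number $q=2^{-\delta}$. Exact comparisons of greedy ratios would be replaced by rational approximations of $q^k$ to $O(r)$ bits, taken with relative error at most a factor $2$ overall. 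Such an approximate-greedy step, one that only picks a set within a constant factor of the best ratio, still obeys the harmonic bound up to that constant. This explains the remark that the constants depend on the rational $\delta$.

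\textbf{Step 3: running time.} There are $3^r$ candidate restrictions and $3^r$ partitions. Each greedy iteration scans all pairs, at $9^r\poly(r)$ cost, and there are at most $3^r$ iterations. The total is $27^r\poly(r)=2^{O(r)}$ bit operations, provided the approximations of $q^k$ for $k\le r$ can be computed in $\poly(r)$ time. They can, for example by bisection on $x^{b}=2^{-a}$ where $\delta=a/b$.

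\textbf{Main obstacle.} Steps 1 and 3 are routine. The delicate point is justifying the harmonic bound for greedy when the weights are only approximately computed, since $q$ is irrational. I would handle this by replacing $w$ with an explicit rational upper bound $\tilde w$ satisfying $w\le\tilde w\le 2w$, and running exact greedy on $\tilde w$. Then greedy's cost is at most $H_{3^r}$ times the $\tilde w$-value of the Step 1 fractional cover. That value is at most twice $3\bigl(\tfrac{2+q}{2}\bigr)^r$, giving exactly the bound $6H_{3^r}\bigl(\tfrac{2+q}{2}\bigr)^r$.
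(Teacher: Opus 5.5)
Your proposal is correct and is essentially the paper's proof. The paper also runs weighted greedy set cover over all $3^r$ restrictions with weights rounded upward so that $w\le\widehat w\le 2w$, and its per-step averaging bound $\min_\sigma \widehat w(\sigma)/d_\sigma\le \sum_\sigma\widehat w(\sigma)/(u\,2^r)$ is exactly your uniform fractional cover $x_\sigma=2^{-r}$ written out inline, where you invoke Chv\'atal's LP-relative greedy bound instead. (Because every restriction is compatible with exactly $2^r$ candidates, your route even gives $H_{2^r}$ in place of $H_{3^r}$, though the lemma does not need this.)
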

\begin{proof}
Consider the universe $\mathcal U_r=\{A,B,C\}^r$ of candidate
membership strings.
For each restriction $\sigma$, let
\[
 K_\sigma
 =\{u\in\mathcal U_r \mid
       u\text{ is compatible with }\sigma\}
\]
be the set of candidates it covers.

For each candidate and each coordinate, exactly two restriction
labels are compatible with its membership label.
Consequently, every candidate belongs to exactly $2^r$ sets
$K_\sigma$.

We first bound the total weight of all restrictions.
For any fixed label $L\in\{X,Y,Z\}$, summing
$q^{|L_\sigma|}$ over all restriction strings gives $(2+q)^r$:
each coordinate contributes $q$ when assigned label $L$ and
$1$ for either of the other two labels.
Therefore,
\[
 \sum_\sigma w(\sigma)=3(2+q)^r.
\]

To implement the construction with finite precision, round each
$q^k$ upward to a multiple of $2^{-r}$, for $0\le k\le r$.
Since $q^k\ge2^{-r}$, the rounded quantity lies between $q^k$
and $2q^k$.
The resulting rational weights satisfy
\[
 w(\sigma)\le\widehat w(\sigma)\le2w(\sigma),
 \qquad
 \sum_\sigma\widehat w(\sigma)\le6(2+q)^r.
\]

Run weighted greedy set cover~\cite{chevatal}.
Initially, all candidates are uncovered.
At each step, let $U_{\rm rem}$ be the set of uncovered candidates
and define
\[
 d_\sigma=|K_\sigma\cap U_{\rm rem}|.
\]
Choose a restriction minimizing
$\widehat w(\sigma)/d_\sigma$ among those with $d_\sigma>0$,
add it to $\FF_r$, and mark its covered candidates as covered.
Continue until no uncovered candidate remains.

Suppose $u=|U_{\rm rem}|>0$.
Since every remaining candidate belongs to exactly $2^r$
sets $K_\sigma$, double counting gives
\[
 \sum_\sigma d_\sigma=u\,2^r,
\]
and, by averaging,
\begin{align*}
 \min_{\sigma:\,d_\sigma>0}
 \frac{\widehat w(\sigma)}{d_\sigma}
 &\le
 \frac{\sum_\sigma\widehat w(\sigma)}
      {\sum_\sigma d_\sigma}\\
 &\le
 \frac{6}{u}\left(\frac{2+q}{2}\right)^r.
\end{align*}

Set
\[
 B_r=6\left(\frac{2+q}{2}\right)^r.
\]
If the greedy choice covers $d$ previously uncovered candidates
when $u$ remain, its rounded weight is at most $dB_r/u$.
Moreover,
\[
 \frac{d}{u}
 \le \sum_{j=u-d+1}^{u}\frac1j.
\]
Across all greedy steps, these summation ranges partition
$\{1,\ldots,3^r\}$.
Thus
\[
 \sum_{\sigma\in\FF_r}w(\sigma)
 \le
 \sum_{\sigma\in\FF_r}\widehat w(\sigma)
 \le B_rH_{3^r},
\]
which proves Eq.~\eqref{eq:local-cover}.
The family is complete because the algorithm terminates only
after covering every candidate.

Finally, there are $3^r$ restrictions and $3^r$ candidates.
Explicit incidence enumeration and a direct implementation of
greedy take $2^{O(r)}$ time.
All rounded weights have $O(r)$ bits.
For fixed rational $\delta=a/d$, the rounded numerator of $q^k$
is
\[
 \left\lceil 2^{r-ak/d}\right\rceil.
\]
It can be computed by integer root comparison, by finding the
smallest integer $t$ such that
\[
 t^d\ge2^{dr-ak}.
\]
The integers involved have $O(r)$ bits, with constants depending
on the fixed $\delta$.
Hence the entire construction takes $2^{O(r)}$ bit operations.
\end{proof}

\section{Truly Sub-$3^n$ Min-Sum Subset Convolution}
\label{sec:global}

Partition $[n]$ into $k$ disjoint coordinate blocks of sizes
$r_1,\ldots,r_k$, where
$1\le r_j\le r=\lceil\sqrt n\rceil$ and $k=\lceil n/r\rceil$.
Construct the families of Lemma~\ref{lem:cover} for these sizes.
Let $\FF$ consist of all restrictions obtained by choosing one member
from each local family and concatenating them.
The Cartesian product can be enumerated without materializing it.

For each restriction $\sigma\in\FF$, let $F_\sigma$ and $G_\sigma$
be the matrices from Lemma~\ref{lem:product}; explicitly,
\[
 F_\sigma[U,T]=f(U\cup T),
 \qquad
 G_\sigma[T,V]=g((X_\sigma\setminus T)\cup V),
\]
where $U\subseteq Z_\sigma$, $T\subseteq X_\sigma$, and
$V\subseteq Y_\sigma$.
Compute the min-plus product \[
 P_\sigma=F_\sigma\star_{\min,+}G_\sigma,
\]
which is a $2^{|Z_\sigma|}\times2^{|Y_\sigma|}$ matrix,
with rows indexed by $U\subseteq Z_\sigma$ and columns indexed
by $V\subseteq Y_\sigma$.
Its entries are
\begin{align*}
 P_\sigma[U,V]
 &=\min_{T\subseteq X_\sigma}
   \bigl(F_\sigma[U,T]+G_\sigma[T,V]\bigr)\\
 &=\min_{T\subseteq X_\sigma}
   \bigl(f(U\cup T)+g((X_\sigma\setminus T)\cup V)\bigr).
\end{align*}
For fixed $U,V$, each choice of $T$ specifies a binary split
\[
 A=U\cup T,
 \qquad
 B=(X_\sigma\setminus T)\cup V
\]
of the same output set $S=X_\sigma\cup U\cup V$.
By Lemma~\ref{lem:product}, these are precisely the splits
of $S$ compatible with $\sigma$.
Thus $P_\sigma[U,V]$ is the minimum cost among those splits
and supplies an update to $h(S)$.

Initialize every entry of $h$ to $+\infty$.
For each $\sigma\in\FF$, compute $P_\sigma$ and perform
\[
 h(X_\sigma\cup U\cup V)
 \gets
 \min\{h(X_\sigma\cup U\cup V),\,P_\sigma[U,V]\}
\]
for all its row and column indices.

\begin{proof}[Proof of Theorem~\ref{thm:transfer}]
Every matrix update uses a valid disjoint pair, by
Lemma~\ref{lem:product}.
Conversely, fix any disjoint pair $(A,B)$ and put
$C=[n]\setminus(A\cup B)$.
On each coordinate block, completeness supplies a compatible local
restriction. Their concatenation belongs to $\FF$ and is compatible
with $(A,B,C)$.
Thus every candidate is considered at least once.
Repeated candidates cause no difficulty because the minimum
operation is idempotent.
The returned table is exactly $h$, as in
Eq.~\eqref{eq:convolution}.

It remains to bound all work, including unbalanced restrictions.
Since $q<1$, Lemma~\ref{lem:restriction-cost} and
\[
 q^{\min\{|X|,|Y|,|Z|\}}
 \le q^{|X|}+q^{|Y|}+q^{|Z|}
\]
give, after suppressing a common $\poly(n,\beta)$ factor,
\begin{align}
 T
 &\le 2^n\sum_{\sigma\in\FF}
       \bigl(q^{|X_\sigma|}
             +q^{|Y_\sigma|}
             +q^{|Z_\sigma|}\bigr)
       \nonumber\\
 &=2^n\sum_{L\in\{X,Y,Z\}}
       \prod_{j=1}^k
       \left(
         \sum_{\tau\in\FF_{r_j}}q^{|L_\tau|}
       \right)
       \nonumber\\
 &\le 3\cdot2^n
       \prod_{j=1}^k
       \left[
         6H_{3^{r_j}}
         \left(\frac{2+q}{2}\right)^{r_j}
       \right]
       \nonumber\\
 &=3(2+q)^n\prod_{j=1}^k6H_{3^{r_j}}
       \nonumber\\
 &\le (2+q)^n\,2^{O(\sqrt n\log(n+1))}.
 \label{eq:global-cost}
\end{align}
The equality in the second line is the distributive identity for the
Cartesian product of local families.
It imposes no balance condition on $X,Y,Z$.
In the last line we use $H_{3^{r_j}}\le1+r_j\ln3$.

The covers take $2^{O(\sqrt n)}\poly(n)$ time to construct,
and initialization and index generation do not affect the bound.
If the matrix algorithm is Las Vegas, the same calculation gives
the expected running time by linearity of expectation.
Every matrix product is computed exactly, so the convolution
is exact as well.
\end{proof}

Note that a candidate partition where $A=[n]$ and $B=C=\varnothing$ requires
$Y=\varnothing$ in every compatible restriction.
Such extreme candidates would be omitted by a covering family
restricted to equal-sized parts.
Consequently, the weighted summation in Eq.~\eqref{eq:global-cost}
accounts for all restriction shapes, explicitly including empty parts.

\section{Application: Join Ordering}\label{sec:join}

We apply our subset-convolution algorithm to the layer-wise
dynamic program of Stoian and Kipf~\cite[Sec.~3.1]{dpconv}.
We call the resulting algorithm \texttt{LeDP}; the change from
\texttt{DPconv} is the computation of min-sum subset convolution
described in Sec.~\ref{sec:global}.

Let $V\coloneqq[n]$ index the base relations.
For every nonempty $S\subseteq V$, let $c(S)<2^\beta$ be its
nonnegative cardinality estimate.
The estimates are supplied as a table, or each $c(S)$ can be
obtained in $\poly(n,\beta)$ time.
A query plan is a full binary tree whose leaves are the relations,
each appearing once.
An internal node with leaf set $S$ contributes $c(S)$ to the
objective $\Cout$.

The well-known dynamic programming recurrence is
\begin{equation}
 \OPT(S)=
 \begin{cases}
  0, & \text{if } |S|=1,\\
  c(S)+
  \displaystyle\min_{\varnothing\ne A\subsetneq S}
  \bigl(\OPT(A)+\OPT(S\setminus A)\bigr),
  & \text{if } |S|\ge2.
 \end{cases}
 \label{eq:join-dp}
\end{equation}

\begin{proof}[Proof of Corollary~\ref{cor:jo-corr}]
Initialize $\OPT(\{i\})=0$ for every $i\in V$.
For $k=2,\ldots,n$, define
\[
 f_k(S)=
 \begin{cases}
  \OPT(S), & \text{if } 1\le |S|<k,\\
  +\infty, & \text{otherwise}.
 \end{cases}
\]
Compute $h_k=f_k\star f_k$ and set
\[
 \OPT(S)=c(S)+h_k(S)
 \qquad\text{for all } |S|=k.
\]
Every proper nonempty child of such an $S$ has size less than $k$,
and the empty set has infinite value in $f_k$.
Induction on $k$ therefore gives Eq.~\eqref{eq:join-dp}.

There are $n-1$ convolution calls.
Since a plan has $n-1$ internal nodes, every finite DP value is
at most $(n-1)2^\beta$ and has
$O(\beta+\log(n+1))$ bits.

To forbid cross products in an ordinary undirected query graph,
keep $\OPT(S)=+\infty$ whenever the induced subgraph on $S$
is disconnected, and update only connected sets.
If two connected children $A,B$ have a connected union,
an edge must cross from $A$ to $B$.
Thus every finite transition is a valid non-Cartesian join.
Conversely, every cross-product-free binary plan has connected
leaf sets at all nodes, so no such plan is excluded.
Testing connectivity for all subsets takes $2^n\poly(n)$ time.

If $\OPT(V)=+\infty$, no permitted plan exists.
Otherwise, reconstruct an optimal plan as
in~\cite[Alg.~2]{dpconv}: at each nonsingleton set $S$,
find a proper nonempty subset $A$ satisfying
\[
 \OPT(S)=c(S)+\OPT(A)+\OPT(S\setminus A),
\]
and recurse on the two children.
There are at most $n-1$ such searches, each examining at most
$2^n$ subsets, so reconstruction takes
$2^n\poly(n,\beta)$ time.

Applying Corollary~\ref{cor:min-sum-corr} to the $n-1$
convolution calls gives expected $O^*(2.9987^n)$ Las Vegas
time and $O^*(2.9997^n)$ deterministic time.
\end{proof}

\sparagraph{AI Storytime.}~We used ChatGPT-6 Astra $\{\texttt{xhigh}, \texttt{max}\}$ to develop the algorithm for min-sum subset convolution once we learned about the breakthrough in \cite{breakthrough}. We hinted the model at the split-and-list technique developed in \cite{maxcut-williams}. The author takes full responsibility for the content of this paper.

\bibliography{le-dp}
\bibliographystyle{alphaurl}

\appendix

\section{Instantiating the Min-Plus Primitive}\label{sec:bits}

The following lemma records the bit-complexity bounds obtained
from the algorithms of Alman and Vassilevska
Williams~\cite{breakthrough}.
Part~(i) follows by analyzing their real-valued algorithm
(Theorem~35) on integer inputs.
For part~(ii), their Theorem~22 gives the deterministic bound
for polynomially bounded weights; the proof below extends
its analysis to arbitrary $\beta$-bit integers.

\begin{lemma}[Bit complexity of min-plus product]\label{lem:AW}
The min-plus product of two $D\times D$ matrices with $\beta$-bit
integer entries can be computed:
\begin{enumerate}
 \item[(i)] by a Las Vegas algorithm in expected time
 $D^{2.998}\poly(\beta,\log D)$;
 \item[(ii)] by a deterministic algorithm in time
 $D^{2.9995}\poly(\beta,\log D)$.
\end{enumerate}
Both algorithms return the exact product.
\end{lemma}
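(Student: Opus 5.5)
The proof of Lemma~\ref{lem:AW} will start from the real-valued algorithms of Alman and Vassilevska Williams~\cite{breakthrough}. Both are combinatorial or algebraic procedures whose operations on matrix entries are comparisons, additions and subtractions. Every other part of their computation is index bookkeeping, or ring arithmetic on polynomially bounded quantities that never depend on entry magnitudes. The plan is to audit each algorithm once. We record two facts for every step. First, each value it produces is a sum or difference of $O(1)$, or at most $\poly(D)$, original entries, so it has $O(\beta+\log D)$ bits. Second, the number of such operations matches the claimed exponent up to $\poly(\log D)$. With these two facts, charging each operation $\poly(\beta,\log D)$ bit operations in our random-access model converts their unit-cost real-RAM bound into $D^{\omega'}\poly(\beta,\log D)$.

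\emph{Part (i).} I will invoke their Theorem~35, which computes min-plus products of real matrices in expected time $D^{3-\varepsilon}$ for an explicit $\varepsilon$. I will check that the stated exponent is at most $2.998$. I will also check that its randomness only affects running time: the output is verified, or is correct with certainty, so the algorithm is Las Vegas. Then I apply the audit above. If some step uses real-valued rescaling or rounding, such as bucketing entries by rank, I will replace it by sorting or ranking the integer entries. This preserves the comparison structure and costs $O(D^2\log D\cdot\beta)$.

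\emph{Part (ii).} Their Theorem~22 gives a deterministic $D^{3-\varepsilon'}$ bound only for weights bounded by $\poly(D)$. My approach is to reduce arbitrary $\beta$-bit entries to that regime. First, I compress values by rank: min-plus depends on sums, not just order, so plain ranking is not enough. Instead I will follow their proof and locate where the weight bound is used. I expect this to be a bound on the number of distinct "scales" or buckets in a bit-by-bit or scaling recursion, whose depth is $O(\log W)$. Replacing $\log W$ by $\beta$ multiplies the cost by $\poly(\beta)$, and the intermediate entries remain $O(\beta+\log D)$-bit. Alternatively, a Seidel-style scaling will be tried. This scaling computes products of $\lfloor\cdot/2^i\rfloor$ for $i=\beta,\beta-1,\ldots$, using the fact that the coarser product pins each entry within an additive $O(1)$ window. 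Each level then becomes a bounded-weight product with entries in $\{-O(1),\ldots,O(1)\}\cup\{\infty\}$ after subtracting the coarse estimate. However, that subtraction is only row/column-consistent for additive potentials, so this route needs care.

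The main obstacle is part~(ii), specifically making the bounded-weight hypothesis of Theorem~22 disappear with only a $\poly(\beta)$ overhead while losing at most the slack between their exponent and $2.9995$. I would first try the direct route: re-read their analysis and show that the weight bound enters only logarithmically. I would fall back on the scaling reduction only if that fails.
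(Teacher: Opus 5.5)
Your part (i) is essentially the paper's argument. Invoke their Theorem~35 and note that its input-dependent arithmetic consists of comparisons plus sums and differences of $O(1)$ original entries, so $O(\beta+1)$ bits. The counting arithmetic uses $O(\log D)$-bit integers, so you charge $\poly(\beta,\log D)$ per operation.

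Part (ii), however, has a genuine gap. Your primary route is the right direction: re-read the proof of Theorem~22 and show that the weight bound enters only logarithmically. But you never carry it out. The mechanism you anticipate (a scaling or bucketing recursion of depth $O(\log W)$) is not where the bound is used. ``Only logarithmically'' is exactly the claim that needs proof. For many bounded-weight min-plus algorithms, e.g.\ the $\widetilde O(Mn^{\omega})$ ones, the dependence on the weight bound is polynomial, so it cannot be presumed.

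The paper locates the weight dependence in the deterministic Exact Triangle algorithm underlying Theorem~22. In the prime-hashing argument of their Theorem~17, a nonzero triangle sum has absolute value below $3\cdot 2^{\beta}$. It is therefore divisible by at most $O((\beta+1)/\log L)$ primes in $[\sqrt L/2,\sqrt L)$. Averaging then gives a deterministically chosen prime with $O(s^3(\beta+1)/\sqrt L)$ false positives, hence $O(s^3(\beta+1)/g)$ witness checks. Taking $L=\Theta(s^{1/18})$ and $g=\Theta(L^{1/36})$ as in their Theorem~19 yields $T_{\mathrm{ET}}(s,\beta)=s^{3-1/648}\poly(\beta,\log s)$. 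The reduction of their Theorem~21(b), $D^2\,T_{\mathrm{ET}}(\lceil D^{1/3}\rceil,\beta+O(1))\poly(\beta,\log D)$, then gives exponent $3-1/1944<2.9995$. Without this identification and the explicit exponent check, (ii) is unproved.

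Your fallback cannot fill the gap. After coarse rounding, ``subtracting the coarse estimate'' is a correction that depends on the output pair $(u,v)$, not on a row of the first matrix and a column of the second. So the refined problem is no longer a min-plus product of two bounded-weight matrices, as you yourself suspect. More decisively, suppose exact $\beta$-bit min-plus reduced to $O(\beta)$ products with $O(1)$-bounded entries. Combined with the $\widetilde O(Mn^{\omega})$ bounded-weight algorithms, that would already give an $\widetilde O(\beta n^{\omega})$ min-plus product for arbitrary weights, which is far beyond anything known. This route is therefore not a routine step, and the proof has to rest on the direct audit of the Exact Triangle analysis.
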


\begin{proof}
For (i), use Theorem~35 of~\cite{breakthrough}.
Although stated on a real RAM, its input-dependent arithmetic
uses sums and differences of a constant number of original entries,
together with comparisons; see their Lemmas~36(a) and~37 and
the reductions of~\cite{more-believable}.
These values have $O(\beta+1)$ bits.
The auxiliary counting computations use integers of
$O(\log D)$ bits, as in their Section~4.3.
Accounting for bit arithmetic multiplies the stated expected
running time by at most $\poly(\beta,\log D)$.

For (ii), their Theorem~22 assumes polynomially bounded weights,
so we must retain the weight dependence in its proof.
Their deterministic min-plus algorithm reduces to Exact Triangle,
which asks whether an edge-weighted graph contains a triangle
of total weight zero.
We check the bit-length dependence of this intermediate algorithm.

Consider an Exact Triangle instance with $s$ vertices per part
and weights of absolute value below $2^\beta$.
In the prime-hashing argument of their Theorem~17, a nonzero
triangle sum has absolute value below $3\cdot2^\beta$.
It is therefore divisible by at most
$O((\beta+1)/\log L)$ primes in $[\sqrt L/2,\sqrt L)$,
where $L$ is their auxiliary dimension.
The same averaging argument shows that the deterministically
selected prime has at most
\[
 F(p)=O\!\left(\frac{s^3(\beta+1)}{\sqrt L}\right)
\]
false positives.
Their witness search consequently performs at most
\[
 O\!\left((F(p)+1)\frac{\sqrt L}{g}\right)
 =O\!\left(\frac{s^3(\beta+1)}{g}\right)
\]
checks.

Take $L=\Theta(s^{1/18})$ and $g=\Theta(L^{1/36})$,
as in the first construction of their Theorem~19.
Arithmetic on original weights has polynomial cost in $\beta$,
while the auxiliary counting computations use $O(\log s)$-bit
integers.
The remaining analysis is unchanged, giving
\[
 T_{\mathrm{ET}}(s,\beta)
 =s^{3-1/648}\poly(\beta,\log s).
\]
The reduction in their Theorem~21(b) then gives min-plus product
in
\[
 D^2 T_{\mathrm{ET}}(\lceil D^{1/3}\rceil,\beta+O(1))
 \poly(\beta,\log D)
 =
 D^{3-1/1944}\poly(\beta,\log D)
\]
time.
Since $3-1/1944<2.9995$, this proves (ii).
\end{proof}

\begin{proof}[Proof of Corollary~\ref{cor:min-sum-corr}]
By Lemma~\ref{lem:AW}, we can apply Theorem~\ref{thm:transfer} with \[
 \delta=3-2.998=\frac1{500}
 \qquad\text{and}\qquad
 \delta=3-2.9995=\frac1{2000}
\]
for the Las Vegas and deterministic algorithms, respectively.
The resulting bases satisfy \[
 2+2^{-1/500}<2.9987,
 \qquad
 2+2^{-1/2000}<2.9997.
\]
Since both inequalities are strict, the subexponential factor
in Eq.~\eqref{eq:main-time} is absorbed into the larger bases.
This gives expected $O^*(2.9987^n)$ Las Vegas time and
$O^*(2.9997^n)$ deterministic time.
\end{proof}

\end{document}